\documentclass[twoside,11pt]{article}

\usepackage[preprint,nohyperref]{arxiv}

\usepackage{graphicx}
\usepackage{wrapfig}
\usepackage{float}
\usepackage{enumitem}
\usepackage[dvipsnames,table]{xcolor}
\usepackage{amssymb}
\usepackage{amsmath}
\usepackage{amsfonts}       
\usepackage{mathtools}
\usepackage{bm}
\usepackage{bbm}
\usepackage{multirow}
\usepackage{booktabs}       
\usepackage{longtable}
\usepackage{microtype}      
\usepackage{subcaption}
\usepackage{tikz}
\DisableLigatures[f]{encoding = *, family = *} 
\usepackage{nicefrac}       
\usepackage[utf8]{inputenc} 
\usepackage[T1]{fontenc}    
\usepackage{natbib}

\usepackage[ruled,linesnumbered]{algorithm2e}
\usepackage{algcompatible}
\usepackage{varwidth}
\SetAlgoNlRelativeSize{0} 
\SetCommentSty{} 

\usepackage[
colorlinks=true,
urlcolor=RoyalBlue,
citecolor=Green,
anchorcolor=red,
backref=page
]{hyperref}
\usepackage{url}
\usepackage{xurl} 

\makeatletter
\def\theHALG@line{\arabic{algocf}.\arabic{ALG@line}}
\makeatother

\usepackage[capitalize,noabbrev]{cleveref}
\crefname{section}{Section}{Sections}
\crefname{subsection}{Section}{Sections}
\crefname{appendix}{Appendix}{Appendices}
\crefname{figure}{Figure}{Figures}
\crefname{table}{Table}{Tables}
\crefname{algorithm}{Algorithm}{Algorithms}
\crefname{theorem}{Theorem}{Theorems}

\crefformat{equation}{Eq.~(#2#1#3)}
\crefrangeformat{equation}{Eqs.~(#3#1#4)~to~(#5#2#6)}
\crefmultiformat{equation}{Eqs.~(#2#1#3)}{,~(#2#1#3)}{, (#2#1#3)}{,~(#2#1#3)}
\crefrangemultiformat{equation}{Eqs.~(#3#1#4)~to~(#5#2#6)}{,~(#3#1#4)~to~(#5#2#6)}{, (#3#1#4)~to~(#5#2#6)}{,~(#3#1#4)~to~(#5#2#6)}

\newcommand{\algcomment}[1]{\hfill$\triangleright$~#1}

\firstpageno{1}

\begin{document}

\title{GPTQ-2D: Cubic-Time Two-Sided Adaptive Rounding}

\author{\name Jiale Chen
\email Jiale.Chen@ist.ac.at
\hfill
\addr Institute of Science and Technology Austria (ISTA)
\AND
\name Torsten Hoefler
\hfill
\addr ETH Z\"urich
\AND
\name Dan Alistarh
\hfill
\addr Institute of Science and Technology Austria (ISTA) \& Red Hat AI
}

\maketitle

\begin{abstract}
Adaptive rounding methods such as GPTQ, or equivalently Babai's nearest plane algorithm, round a real matrix to integers under a quadratic metric.
They process the entries in a fixed order, one at a time, propagating each rounding error to the entries not yet processed through a triangular feedback matrix.
We study the two-sided version of this task, in which fixed nonsingular basis matrices act on both the left and the right of the residual; the familiar one-sided case is the special case of an identity right basis.
Vectorizing the matrix turns the two-sided objective into a quadratic metric whose Gram matrix is a Kronecker product, so the one-dimensional algorithm applies verbatim, but takes quartic time in the matrix dimension.
We present GPTQ-2D, which produces the identical rounded matrix in cubic time.
It rounds the entries anti-diagonal by anti-diagonal; entries on the same anti-diagonal are independent and are rounded in parallel.
\end{abstract}

\section{Introduction}
\label{sec:intro}

Adaptive rounding methods, such as GPTQ~\citep{frantar2023optq} or equivalently Babai's nearest plane algorithm~\citep{babai1986}, round a real matrix $X\in\mathbb{R}^{m\times n}$ to an integer matrix $Z\in\mathbb{Z}^{m\times n}$ under a quadratic (squared) metric.
In the one-sided case the metric is $\|A(Z-X)\|_{\mathrm{F}}^2$ for a fixed nonsingular basis matrix $A\in\mathbb{R}^{m\times m}$.
These methods do not exactly minimize that objective; they greedily process the entries in a fixed order, one at a time, rounding each entry to the nearest integer given the entries already rounded.
After an entry is rounded, its rounding error is propagated to the not-yet-processed entries through a triangular feedback matrix obtained from an $O(m^3)$ factorization of the inverse of the $m\times m$ Gram matrix.
Because this one-sided metric couples only the entries within each column, the $n$ columns are then rounded independently in an $O(m^2n)$ sweep.
The total work is $O(m^2\max(m,n))$, cubic in the dimension for a square matrix.

This paper studies a two-sided version, adding a second fixed nonsingular basis matrix $B\in\mathbb{R}^{n\times n}$ on the right, so that the metric becomes
\begin{equation}
\|A(Z-X)B\|_{\mathrm{F}}^2 = \|(B^{\top}\otimes A)(\operatorname{vec}(Z)-\operatorname{vec}(X))\|^2 ,
\end{equation}
where the second form uses the identity $\operatorname{vec}(A(Z-X)B)=(B^{\top}\otimes A)\operatorname{vec}(Z-X)$ with $\otimes$ denoting the Kronecker product and $\operatorname{vec} ( \cdot )$ denoting the vectorization operator.\footnote{We use column-major vectorization: the entry $X_{ij}$ of an $m\times n$ matrix becomes the $\big((j-1)m+i\big)$-th entry of $\operatorname{vec}(X)\in\mathbb{R}^{mn}$.}
A one-dimensional adaptive rounding algorithm therefore applies to the $mn$ vectorized coordinates.
Now the right basis $B$ couples the columns too, so they no longer decouple, and a direct sweep over all $mn$ coordinates costs $O(m^2n^2)$, quartic for a square matrix.

The Kronecker structure offers a shortcut.
A rounding error at entry $(i,j)$ contributes a rank-one update supported only on the lower-right rectangle anchored at $(i,j)$ (rows $\ge i$, columns $\ge j$), so entries on the same anti-diagonal are independent and can be rounded in parallel.
Our algorithm, which we call GPTQ-2D, exploits this: rather than applying each error's rank-one update densely over its rectangle, it maintains the corrected matrix together with a buffer of accumulated feedback and pushes each error only along its own row and column as soon as it is produced.
The rest of the rectangle is filled in lazily by the pushes of later entries.
This reproduces the exact same rounding trajectory in only $O(mn\max(m,n))$ sweep work, versus quartic for the direct sweep.
With the one-time $O((\max(m,n))^3)$ Gram factorization shared by both approaches, GPTQ-2D runs in $O((\max(m,n))^3)$ work in total, cubic for square matrices, with a rounding sweep of $m+n-1$ parallel stages.

GPTQ-2D extends GPTQ rather than competing with it.
One-sided adaptive rounding is the special case $B=I$.
The question is what the extension costs.
Reaching the same two-sided trajectory by vectorizing and sweeping costs $O(m^2n^2)$, a factor $\Theta(\min(m,n))$ more than the $O(mn\max(m,n))$ sweep of GPTQ-2D.
Removing that factor is what brings the two-sided problem down to the cost of the one-sided one: for $m\ge n$ both GPTQ and GPTQ-2D run in $O(m^3)$ work in total.

\textbf{Contributions.}
We recast two-sided matrix rounding as vector rounding under a Kronecker metric, and show that the resulting feedback is rank-one and separable, so entries on a shared anti-diagonal are independent.
Building on that independence, we reduce the rounding sweep from the $O(m^2n^2)$ of a dense realization to $O(mn\max(m,n))$.
The resulting algorithm reproduces the exact one-dimensional adaptive rounding trajectory in $O((\max(m,n))^3)$ work, which for $m\ge n$ matches the asymptotic cost of one-sided GPTQ on the same matrix.
We prove the exact equivalence (\cref{thm:equiv}) and describe a blocked, GPTQ-style implementation that turns the feedback into a few band-like matrix products (\cref{sec:blocked}).

\section{Related Work}
\label{sec:related}

\textbf{Adaptive rounding via second-order information.}
Pruning and quantization guided by second-order information originate with Optimal Brain Surgeon~\citep{hassibi1993obs}, which ranks weights by inverse-Hessian saliencies and applies a closed-form error-feedback update to the remaining weights.
Optimal Brain Compression~\citep{frantar2022obc} ports this exact one-at-a-time scheme to the layer-wise post-training objective whose per-row Hessian is the input second moment.
GPTQ~\citep{frantar2023optq} fixes a single column order shared across all rows.
This lets the inverse Hessian be precomputed once as a Cholesky factor, which reduces the cost from quartic to cubic for a square layer and enables LLM-scale quantization.
QuIP~\citep{cheequip2023} recasts this triangular error feedback as LDLQ and proves it is equivalent to GPTQ.
All of these methods are \emph{one-sided}: a single Gram matrix acts on one side of the weight matrix.

\textbf{Lattice and nearest-plane view.}
Rounding under a quadratic metric is exactly a sweep of Babai's nearest plane algorithm for the closest vector problem on the lattice induced by the metric~\citep{babai1986}.
This connection was made precise for GPTQ by~\citet{chen2025geometryllmquantizationgptq} and~\citet{birnick2025latticegeometryneuralnetwork}: GPTQ's column-sequential rounding with Cholesky error feedback coincides with Babai's projection sweep.
In this view a Kronecker Gram defines a tensor-product lattice whose Cholesky factor is the Kronecker product of the per-axis Cholesky factors, and the nearest plane on each axis is the classical primitive.
The lattice view also provides an error bound for GPTQ.
However, it makes no global-optimality claim.
Computing an exact closest lattice point is NP-hard in general~\citep{Dinur2003}.

\textbf{Two-sided and Kronecker-factored rounding.}
Closest to our setting is YAQA~\citep{tseng2026modelpreserving}, which rounds under a Kronecker-factored Hessian by generalizing the LDLQ fixed-point iteration to a left and a right factor.
Exploiting the Kronecker structure, it bounds the number of sweeps by at most $m+n-1$, and its reference implementation already sweeps the block anti-diagonals of the weight matrix.
Each sweep applies the feedback as dense matrix products, so the iteration amounts to $O(mn(\max(m,n))^2)$ work, quartic for a square matrix.
Up to the direction of the sweep, the two compute the same rounding: YAQA rounds in reverse order, from the last row and column, where GPTQ-2D rounds forward.
GPTQ-2D computes it in $O(mn\max(m,n))$ sweep work, cubic for a square matrix, so it can be dropped in as YAQA's rounding step, leaving its Kronecker-factored Hessian sketch untouched.

\section{Preliminaries}
\label{sec:prelim}

This section recalls one-dimensional adaptive rounding (\cref{sec:1d}) and then states the two-sided problem (\cref{sec:problem}).

\subsection{One-Dimensional Adaptive Rounding}
\label{sec:1d}

We first recall the standard one-dimensional adaptive rounding primitive.
Let $x\in\mathbb{R}^d$ be an input vector, $\lfloor\cdot\rceil$ the nearest-integer rounding operator, and $F\in\mathbb{R}^{d\times d}$ a unit lower triangular feedback matrix.
Adaptive rounding maintains a corrected vector $y$, initialized to $y=x$, and sweeps the coordinates in a fixed order.
At coordinate $k$ ($k = 1, \dots, d$) it rounds
\begin{equation}
z_k = \lfloor y_k \rceil,
\end{equation}
records the local rounding error
\begin{equation}
E_k = z_k-y_k,
\end{equation}
and propagates it to the remaining coordinates through the $k$-th column of $F$,
\begin{equation}
y \leftarrow y + F_{:,k} E_k.
\end{equation}
Because $F$ is unit lower triangular, this update affects only coordinates $k,k+1,\ldots,d$, so a single forward sweep suffices (\cref{alg:1d}), at a cost of $O(d^2)$ arithmetic operations.
Equivalently, if $E\in\mathbb{R}^d$ collects the local errors of the already-processed coordinates and is zero elsewhere, the implicit corrected vector is
\begin{equation}
y=x+FE.
\label{eq:1d-feedback}
\end{equation}
The feedback matrix is fixed by the metric: for the one-sided objective $\|A(z-x)\|_2^2$ with a nonsingular basis $A\in\mathbb{R}^{d\times d}$ and Gram $A^{\top}A$, GPTQ takes $F$ to be the unit lower triangular factor of the LDL decomposition of the inverse Gram, $(A^{\top}A)^{-1}=F\Lambda_F F^{\top}$~\citep{chen2025geometryllmquantizationgptq}.
We keep $F$ abstract in \cref{alg:1d} and specialize this construction to the Kronecker metric in \cref{sec:feedback}.
The two-dimensional algorithm developed below is the exact matrix analogue of \cref{alg:1d}.

\begin{algorithm}[!htb]
\caption{One-dimensional adaptive rounding}
\label{alg:1d}
\begin{algorithmic}[1]

\STATEx {\bfseries Input:}

\STATEx input vector $x \in \mathbb{R}^{d}$, unit lower triangular feedback $F \in \mathbb{R}^{d \times d}$

\STATEx {\bfseries Output:}

\STATEx rounded vector $z \in \mathbb{Z}^{d}$

\STATE $y \gets x$
\FOR{$k = 1, \ldots, d$}
    \STATE $z_k \gets \lfloor y_k \rceil$
    \STATE $E_k \gets z_k - y_k$
    \STATE $y_{k:d} \gets y_{k:d} + F_{k:d, k} E_k$ \algcomment{$F$ unit lower triangular}
\ENDFOR
\STATE \textbf{return} $z$
\end{algorithmic}
\end{algorithm}

\subsection{Two-Sided Matrix Rounding}
\label{sec:problem}

We now pose the two-dimensional problem.
Let $X\in\mathbb{R}^{m\times n}$ be an input matrix and
\begin{equation}
A\in\mathbb{R}^{m\times m},
\qquad
B\in\mathbb{R}^{n\times n}
\end{equation}
nonsingular basis matrices.
We round $X$ to an integer matrix $Z\in\mathbb{Z}^{m\times n}$ under the two-sided objective, written in terms of the residual $R=Z-X$ as
\begin{equation}
\|A(Z-X)B\|_{\mathrm{F}}^2
=
\|ARB\|_{\mathrm{F}}^2.
\label{eq:objective}
\end{equation}
Define the two Gram matrices
\begin{equation}
G=A^{\top}A,
\qquad
H=BB^{\top}.
\label{eq:GH}
\end{equation}
Using the column-major vectorization identity $\operatorname{vec}(ARB)=(B^{\top}\otimes A)\operatorname{vec}(R)$ and the Kronecker mixed-product property $(B^{\top}\otimes A)^{\top}(B^{\top}\otimes A)=(BB^{\top})\otimes(A^{\top}A)=H\otimes G$, we obtain
\begin{equation}
\|ARB\|_{\mathrm{F}}^2
=
\operatorname{vec}(R)^{\top}
(H\otimes G)
\operatorname{vec}(R).
\label{eq:kron-objective}
\end{equation}
Thus two-sided matrix rounding is exactly vector rounding of $\operatorname{vec}(X)$ under the Kronecker Gram matrix $H\otimes G$, and the one-dimensional primitive of \cref{sec:1d} applies with $d=mn$.
The remainder of the paper develops an implementation that exploits the Kronecker structure to avoid the $O(m^2n^2)$ cost of a dense $mn$-dimensional sweep.

\section{Theoretical Results}
\label{sec:theory}

We now lift the one-dimensional sweep of \cref{sec:1d} to the two-sided problem.
\cref{sec:feedback} specializes the feedback representation of \cref{eq:1d-feedback} to the Kronecker metric, derives the rank-one matrix update, and shows that entries on a common anti-diagonal are independent and that the anti-diagonal sweep reproduces the vectorized trajectory (\cref{thm:order}), giving a parallel but quartic dense implementation.
\cref{sec:parallel} refines this into GPTQ-2D, a cubic-time algorithm that we prove reproduces the vectorized trajectory.
\cref{sec:blocked} blocks that sweep so its many short updates collapse into a few band-like matrix products.
\cref{sec:variants} collects equivalent variants and compares their complexity.

\subsection{Two-Sided Feedback and the Rank-One Update}
\label{sec:feedback}

We apply the one-dimensional template to $\operatorname{vec}(X)\in\mathbb{R}^{mn}$ under the Kronecker Gram $H\otimes G$.
By the one-dimensional construction of \cref{sec:1d} applied along each axis, the row Gram $G$ yields the unit lower triangular $L\in\mathbb{R}^{m\times m}$ with $G^{-1}=L\Lambda_L L^{\top}$, and the column Gram $H$ yields the unit upper triangular $U\in\mathbb{R}^{n\times n}$ whose transpose satisfies $H^{-1}=U^{\top}\Lambda_U U$.
Because the vectorized metric is the Kronecker product $H\otimes G$, its inverse $H^{-1}\otimes G^{-1}$ is again a Kronecker product, and the LDL factor of a Kronecker product is the Kronecker product of the per-axis LDL factors; hence the vectorized feedback matrix is
\begin{equation}
F = U^{\top}\otimes L.
\label{eq:vectorized-feedback}
\end{equation}

Let $E\in\mathbb{R}^{m\times n}$ collect the local rounding errors.
Combining the feedback representation of \cref{eq:1d-feedback} with $F=U^{\top}\otimes L$ and the vectorization identity $\operatorname{vec}(LEU)=(U^{\top}\otimes L)\operatorname{vec}(E)$, the implicit corrected matrix is
\begin{equation}
Y=X+LEU.
\label{eq:Y-LEU}
\end{equation}
Suppose the algorithm rounds entry $(i,j)$ and produces local error $E_{ij}$.
Let $J^{ij}\in\mathbb{R}^{m\times n}$ be the matrix with a single one at entry $(i,j)$ and zeros elsewhere.
The contribution of this error to the corrected matrix is
\begin{equation}
L(E_{ij}J^{ij})U
=
L_{:,i}E_{ij}U_{j,:},
\end{equation}
so a scalar rounding error at $(i,j)$ produces the rank-one feedback update
\begin{equation}
Y \leftarrow Y + L_{:,i}E_{ij}U_{j,:}.
\label{eq:rank-one-update}
\end{equation}
This is the central two-dimensional feedback formula.

Because $L$ is lower triangular, the column $L_{:,i}$ is supported only on rows $i,\ldots,m$; because $U$ is upper triangular, the row $U_{j,:}$ is supported only on columns $j,\ldots,n$.
Hence the rank-one update of \cref{eq:rank-one-update} is supported only on the lower-right rectangle $\{i,\ldots,m\}\times\{j,\ldots,n\}$.
Grouping the entries by anti-diagonal,
\begin{equation}
\mathcal{I}_s
=
\{(i,j)\in\{1,\ldots,m\}\times\{1,\ldots,n\}: i+j=s\},
\qquad
s=2,\ldots,m+n,
\label{eq:anti-diagonal}
\end{equation}
an update from $(i,j)\in\mathcal{I}_s$ can reach $(p,q)$ only if $p\ge i$ and $q\ge j$; on the same anti-diagonal $p+q=i+j$ forces $(p,q)=(i,j)$.
Thus entries on a common anti-diagonal are independent and may be rounded in parallel, once all previous anti-diagonals have been processed (\cref{fig:dependency}).

\begin{figure}[ht]
\centering
\begin{tikzpicture}[x=1.4cm,y=1.0cm,font=\scriptsize]
  \foreach \i/\j in {1/1,1/2,1/4,2/1,2/3,2/4,3/2,3/3,3/4}{
    \node (n\i\j) at (\j,-\i) {$X_{\i\j}$};
  }
  \foreach \i/\j in {1/3,2/2,3/1}{
    \node[red] (n\i\j) at (\j,-\i) {$X_{\i\j}$};
  }
  \foreach \i in {1,2,3}{
    \foreach \j/\jj in {1/2,2/3,3/4}{
      \draw[->] (n\i\j) -- (n\i\jj);
    }
  }
  \foreach \i/\ii in {1/2,2/3}{
    \foreach \j in {1,2,3,4}{
      \draw[->] (n\i\j) -- (n\ii\j);
    }
  }
\end{tikzpicture}
\caption{The dependency graph of the two-sided sweep, for the $3\times4$ example. An arrow marks an immediate dependency; entry $(i,j)$ receives feedback only from entries that reach it along the arrows ($i'\le i$, $j'\le j$). Any topological order is admissible for \cref{thm:order}. Entries of a common anti-diagonal, such as $\mathcal{I}_4$ (\textcolor{red}{red}), are mutually unreachable and round in parallel; the anti-diagonal sweep processes the graph in $m+n-1$ such wavefronts.}
\label{fig:dependency}
\end{figure}

Rounding by anti-diagonals deviates from the column-major order of the vectorized sweep, so we record that adaptive rounding is insensitive to any reordering that respects the feedback dependencies.

\begin{theorem}[Order invariance]
\label{thm:order}
Fix $x\in\mathbb{R}^{d}$, a unit lower triangular feedback matrix $F\in\mathbb{R}^{d\times d}$, and a deterministic tie-breaking rule for $\lfloor\cdot\rceil$.
Say that coordinate $k$ depends on coordinate $k'$, written $k'\prec k$, if $F_{kk'}\neq 0$ and $k'\neq k$, and call a visitation order admissible if it rounds $k'$ before $k$ whenever $k'\prec k$.
Adaptive rounding under every admissible order produces the same corrected values, the same local errors, and the same rounded vector $z$ as the forward sweep of \cref{alg:1d}.
\end{theorem}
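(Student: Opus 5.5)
The plan is to show that, under any admissible order, the value $y_k$ at the moment coordinate $k$ is rounded depends only on $x_k$ and on the local errors $E_{k'}$ of the coordinates $k'\prec k$. Granting that, a strong induction along the partial order $\prec$ gives equal errors under every admissible order, and the forward sweep of \cref{alg:1d} is itself admissible.

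\textbf{Step 1: closed form for $y_k$.} In a generic admissible run, each processed coordinate $k'$ adds $F_{:,k'}E_{k'}$ to $y$. Coordinate $k$ is untouched by its own update until after it is rounded, so when $k$ is visited we have
\[
y_k = x_k + \sum_{k' \text{ processed before } k,\ k'\neq k} F_{kk'}E_{k'} .
\]
Nonzero terms need $F_{kk'}\neq 0$ with $k'\neq k$, i.e.\ $k'\prec k$. Admissibility guarantees every such $k'$ was processed before $k$. Coordinates processed earlier with $F_{kk'}=0$ contribute nothing. Hence
\[
y_k = x_k + \sum_{k'\prec k} F_{kk'}E_{k'}
\]
exactly, independently of the order. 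The unit diagonal of $F$ only matters after rounding: it sets $y_k$ to $z_k$, and that value is never read again.

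\textbf{Step 2: induction.} Since $F$ is lower triangular, $k'\prec k$ implies $k'<k$, so the relation is well founded. Induct on $k=1,\dots,d$ in index order. If every $E_{k'}$ with $k'\prec k$ agrees between two admissible runs, then by Step 1 the values $y_k$ agree. The deterministic tie-breaking rule then makes $z_k=\lfloor y_k\rceil$ agree, and so does $E_k=z_k-y_k$.

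\textbf{Step 3: conclusion.} The forward sweep visits coordinates in index order. Because lower-triangularity forces $k'<k$ whenever $k'\prec k$, that order is admissible, so the induction identifies every admissible run with \cref{alg:1d}.

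The main subtlety is Step 1: making precise what the "corrected value" of coordinate $k$ means in a reordered run, namely its value immediately before $k$ is rounded. The argument must also handle coordinates processed before $k$ that are not its predecessors, and coordinates processed after $k$. The first group contributes zero because $F_{kk'}=0$. The second group cannot affect $y_k$ before it is read. Once the bookkeeping is fixed this way, the rest is routine.
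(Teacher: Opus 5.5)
Your proposal is correct and follows the paper's proof essentially step for step. Both establish the order-independent closed form $y_k=x_k+\sum_{k'\prec k}F_{kk'}E_{k'}$ from admissibility and the vanishing of non-predecessor entries, then run a strong induction on $k$ using $k'\prec k\Rightarrow k'<k$, and finally note that the forward sweep is itself admissible.
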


\begin{proof}
When coordinate $k$ is rounded, its corrected value is $y_k=x_k+\sum_{k'}F_{kk'}E_{k'}$, summed over the already-rounded coordinates $k'$.
Already-rounded coordinates with $k'\not\prec k$ have $F_{kk'}=0$ and contribute nothing, while in an admissible order every $k'\prec k$ is already rounded; hence $y_k=x_k+\sum_{k'\prec k}F_{kk'}E_{k'}$, a fixed function of the errors of the $\prec$-predecessors of $k$.
Because $F$ is unit lower triangular, $k'\prec k$ implies $k'<k$, so strong induction on $k$ gives the same errors, hence the same $y_k$, $z_k=\lfloor y_k\rceil$, and $E_k$, under every admissible order.
The forward sweep of \cref{alg:1d} is itself admissible.
\end{proof}

Under the vectorized feedback of \cref{eq:vectorized-feedback}, the coefficient coupling coordinate $(i,j)$ to $(i',j')$ is $F_{(j-1)m+i,\,(j'-1)m+i'}=L_{ii'}U_{j'j}$, the coefficient of the rank-one update in \cref{eq:rank-one-update}; it is nonzero only if $i'\le i$ and $j'\le j$, and every such $(i',j')$ other than $(i,j)$ itself lies on an earlier anti-diagonal.
The anti-diagonal sweep is therefore admissible, and by \cref{thm:order} it reproduces the exact trajectory of the column-major forward sweep of $\operatorname{vec}(X)$.

The most direct realization keeps the corrected matrix $Y=X+LEU$ explicitly, rounding each anti-diagonal in parallel and then applying the rank-one update of \cref{eq:rank-one-update} for all of its entries (\cref{alg:dense}).
The additive updates commute, so the updates of each anti-diagonal fuse into one dense update.
Its anti-diagonal parallel depth is $O(\max(m,n))$, but each of the $mn$ rank-one updates touches a block of size up to $mn$, for $O(m^2n^2)$ work.

\begin{algorithm}[!htb]
\caption{Direct anti-diagonal dense updates}
\label{alg:dense}
\begin{algorithmic}[1]

\STATEx {\bfseries Input:}

\STATEx input matrix $X \in \mathbb{R}^{m \times n}$, nonsingular basis matrices $A \in \mathbb{R}^{m \times m}$ and $B \in \mathbb{R}^{n \times n}$

\STATEx {\bfseries Output:}

\STATEx rounded matrix $Z \in \mathbb{Z}^{m \times n}$

\STATE $L \gets \operatorname{LDL}\big((A^{\top}A)^{-1}\big)$ \algcomment{unit lower triangular row feedback}
\STATE $U \gets \operatorname{LDL}\big((BB^{\top})^{-1}\big)^{\top}$ \algcomment{unit upper triangular column feedback}
\STATE Initialize $Z \in \mathbb{Z}^{m \times n}$
\STATE $Y \gets X$
\FOR{$s = 2, \ldots, m + n$}
    \STATE $\mathcal{I}_s \gets \{ (i, j) : 1 \le i \le m,\ 1 \le j \le n,\ i + j = s \}$
    \FORALL{$(i, j) \in \mathcal{I}_s$ \textbf{in parallel}}
        \STATE $Z_{ij} \gets \lfloor Y_{ij}\rceil$
        \STATE $E_{ij} \gets Z_{ij} - Y_{ij}$
    \ENDFOR
    \STATE $Y \gets Y + \sum_{(i, j) \in \mathcal{I}_s} L_{:, i}\, E_{ij}\, U_{j, :}$ \algcomment{fused dense rank-one updates}
\ENDFOR
\STATE \textbf{return} $Z$
\end{algorithmic}
\end{algorithm}

\subsection{GPTQ-2D Algorithm}
\label{sec:parallel}

\cref{alg:parallel} (GPTQ-2D) is our main algorithm: it produces the same rounded matrix as the dense \cref{alg:dense} (\cref{thm:equiv}) but in cubic instead of quartic time.
Like \cref{alg:dense} it maintains the corrected matrix $Y$, but it also carries an auxiliary $C=LE$ and propagates each rounding error only along its own row (through $U$) and column (through $L$) rather than over the entire lower-right block.
It rounds one anti-diagonal at a time, same as \cref{alg:dense}.
Note that there also exists a symmetric mirror algorithm that maintains $D=EU$ instead.

In \cref{alg:parallel}, each entry does an $O(1)$ rounding and $O(\max(m,n))$ feedback (a fold of length $m-i+1$ into $C$ and a downward push of length $m-i$ that reuse one product $M=L_{:,i}E_{ij}$, plus a rightward push of length $n-j$), so the sweep costs $O(mn\max(m,n))$ work in $m+n-1$ parallel stages, cubic for a square matrix against the $O(m^2n^2)$ of the dense \cref{alg:dense}.
The one-time computation of $L$ and $U$, which inverts the two Gram matrices and LDL-factorizes the inverses, costs $O((\max(m,n))^3)$ and dominates, so the end-to-end cost is $O((\max(m,n))^3)$.
For a square matrix, it is the same order as the sweep itself.
The working storage is $O((\max(m,n))^2)$ for the factors and $O(mn)$ for the sweep buffers.

\begin{algorithm}[!htb]
\caption{GPTQ-2D}
\label{alg:parallel}
\begin{algorithmic}[1]

\STATEx {\bfseries Input:}

\STATEx input matrix $X \in \mathbb{R}^{m \times n}$, nonsingular basis matrices $A \in \mathbb{R}^{m \times m}$ and $B \in \mathbb{R}^{n \times n}$

\STATEx {\bfseries Output:}

\STATEx rounded matrix $Z \in \mathbb{Z}^{m \times n}$

\STATE $L \gets \operatorname{LDL}\big((A^{\top}A)^{-1}\big)$
\STATE $U \gets \operatorname{LDL}\big((BB^{\top})^{-1}\big)^{\top}$
\STATE Initialize $Z \in \mathbb{Z}^{m \times n}$
\STATE $Y \gets X$
\STATE $C \gets 0 \in \mathbb{R}^{m \times n}$ \algcomment{buffer $C=LE$}
\FOR{$s = 2, \ldots, m + n$}
    \STATE $\mathcal{I}_s \gets \{ (i, j) : 1 \le i \le m,\ 1 \le j \le n,\ i + j = s \}$
    \FORALL{$(i, j) \in \mathcal{I}_s$ \textbf{in parallel}}
        \STATE $Z_{ij} \gets \lfloor Y_{ij}\rceil$ \label{line:round}
        \STATE $E_{ij} \gets Z_{ij} - Y_{ij}$ \label{line:err}
        \STATE $M_{i:m, j} \gets L_{i:m, i}\, E_{ij}$ \algcomment{the error scaled down column $i$ of $L$}
        \STATE $C_{i:m, j} \gets C_{i:m, j} + M_{i:m, j}$ \label{line:cupdate} \algcomment{fold the error into $C=LE$}
        \STATE $Y_{i+1:m, j} \gets Y_{i+1:m, j} + M_{i+1:m, j}$ \label{line:pushdownY} \algcomment{push down}
    \ENDFOR
    \FORALL{$(i, j) \in \mathcal{I}_s$ \textbf{in parallel}}
        \STATE $Y_{i, j+1:n} \gets Y_{i, j+1:n} + C_{ij}\, U_{j, j+1:n}$ \label{line:pushright} \algcomment{push right}
    \ENDFOR
\ENDFOR
\STATE \textbf{return} $Z$
\end{algorithmic}
\end{algorithm}

\begin{theorem}[Trajectory equivalence]
\label{thm:equiv}
Fix $X$, $L$, $U$, and a deterministic tie-breaking rule for $\lfloor\cdot\rceil$.
\cref{alg:dense,alg:parallel} produce exactly the same rounded matrix $Z$ as one-dimensional adaptive rounding of $\operatorname{vec}(X)$ under the feedback matrix $U^{\top}\otimes L$.
\end{theorem}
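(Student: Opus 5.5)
For \cref{alg:dense} there is little left to prove. It is the anti-diagonal sweep written in matrix form: after each stage, $Y$ equals $X+LEU$ with $E$ holding the errors of all processed entries. Each entry is rounded only after every entry $(i',j')$ with $i'\le i$, $j'\le j$ has been rounded, and the remaining rank-one contributions vanish on it. So when $(i,j)$ is rounded, $Y_{ij}$ is exactly the one-dimensional corrected value $x_k+\sum_{k'\prec k}F_{kk'}E_{k'}$ under the vectorized feedback $U^\top\otimes L$. \cref{thm:order}, together with the admissibility remark following it, then gives the same $Z$ as the column-major sweep. The real work is to show that \cref{alg:parallel} produces, at the moment each entry is rounded, the same value $Y_{ij}$ as \cref{alg:dense}. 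Then the two algorithms generate identical errors, by induction over anti-diagonals.

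The key identity is that \cref{alg:dense} rounds $(i,j)$ from
\[
X_{ij}+\sum_{i'\le i,\,j'\le j,\,(i',j')\ne(i,j)} L_{ii'}E_{i'j'}U_{j'j}
= X_{ij}+\sum_{j'<j} C_{ij'}U_{j'j}+\sum_{i'<i}L_{ii'}E_{i'j},
\]
where $C=LE$ restricted to the processed errors. The split uses $U_{jj}=1$: the $j'=j$ terms are $\sum_{i'<i}L_{ii'}E_{i'j}$, and the $j'<j$ terms group into $C_{ij'}=\sum_{i'\le i}L_{ii'}E_{i'j'}$. Only $i'\le i$ contributes, because $L$ is lower triangular.

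I would prove by strong induction on $s$ the following invariant at the start of stage $s$. For every entry $(p,q)$ with $p+q\ge s$, $Y_{pq}$ equals $X_{pq}$ plus (a) the downward pushes $\sum_{i'<p,\,i'+q<s}L_{pi'}E_{i'q}$ and (b) the rightward pushes $\sum_{j'<q,\,p+j'<s}C^{\mathrm{fin}}_{pj'}U_{j'q}$. Moreover, for $p+q<s$, the buffer entry $C_{pq}$ equals the final value $C^{\mathrm{fin}}_{pq}=\sum_{i'\le p}L_{pi'}E_{i'q}$. Both parts follow by checking which lines write which entries. The fold in line~\ref{line:cupdate} from $(i',q)$ adds $L_{pi'}E_{i'q}$ to $C_{pq}$. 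Every $i'\le p$ has $i'+q\le p+q$, so all these contributions have arrived by the end of stage $p+q$, and $C_{pq}$ is final before the push-right loop of that same stage reads it. Line~\ref{line:pushright} pushes $C_{pj'}$ right from $(p,j')$ during stage $p+j'$. By then $C_{pj'}$ is final, since the fold loop in stage $p+j'$ runs first. Line~\ref{line:pushdownY} pushes $E_{i'q}$ down into every $(p,q)$ with $p>i'$. Specializing the invariant to $(p,q)\in\mathcal I_s$, every $i'<p$ has $i'+q<s$ and every $j'<q$ has $p+j'<s$. So the invariant reduces exactly to the displayed identity, and the value $Y_{pq}$ used in line~\ref{line:round} matches \cref{alg:dense}.

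The main obstacle is the bookkeeping of timing within a stage. The rightward push must read a $C_{ij}$ that already includes the fold from $(i,j)$ itself, which is why the fold loop precedes the push loop. The argument must also confirm that no entry receives an unwanted or duplicate contribution. In particular, the downward push writes only rows $>i$, so it never double-counts the diagonal term $L_{ii}E_{ij}=E_{ij}$ that appears in $C_{ij}$. The rightward push reads $C$ rather than $Y$, so the downward-push contributions of other columns are not propagated a second time. Finally, parallel writes within one stage touch disjoint entries or commute as additions. Once the invariant is verified line by line, the induction closes, and \cref{thm:order} finishes the proof.
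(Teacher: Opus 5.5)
Your proposal is correct and follows essentially the same route as the paper. You reduce to \cref{thm:order} via admissibility of the anti-diagonal order and note that \cref{alg:dense} maintains $Y=X+LEU$ by construction. For \cref{alg:parallel}, you split the value at rounding time into rightward pushes of already-final $C$ entries plus downward pushes supplying the $j'=j$ term, with $C$ final at push time because the fold loop precedes the push loop. Your explicit per-stage invariant for all not-yet-rounded entries, and your comparison against \cref{alg:dense} instead of directly against $X_{ij}+(LEU)_{ij}$, only refine the paper's bookkeeping.
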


\begin{proof}
The anti-diagonal sweep is admissible for \cref{thm:order} (\cref{sec:feedback}).
In this order, by $\operatorname{vec}(LEU)=(U^{\top}\otimes L)\operatorname{vec}(E)$, the corrected value of adaptive rounding at entry $(i,j)$ is $X_{ij}+(LEU)_{ij}$, where $E$ collects the local errors (Line~\ref{line:err}) of the entries rounded on earlier anti-diagonals (and is zero at $(i,j)$ and beyond).
By \cref{thm:order}, it therefore suffices to show that each algorithm rounds every entry $(i,j)$ at exactly this value.

\cref{alg:dense} maintains $Y$ explicitly and, after rounding each anti-diagonal, applies the rank-one updates of exactly its entries, so $Y=X+LEU$ holds before every stage by construction.

For \cref{alg:parallel} we argue by induction on the anti-diagonal index $s$ that every entry $(i,j)\in\mathcal{I}_s$ is rounded (Line~\ref{line:round}) at $Y_{ij}=X_{ij}+(LEU)_{ij}$.
The auxiliary carries the invariant $C=LE$: each rounded entry folds its error down its own column of $C$ through Line~\ref{line:cupdate}, which (since $L_{ii}=1$) adds the final $i'=i$ term of $(LE)_{ij}$ to $C_{ij}$ and propagates the error to the later rows at once.
Two kinds of update have reached $Y_{ij}$ by stage $s$: the rightward pushes (Line~\ref{line:pushright}) of the rounded entries $(i,j')$ with $j'<j$, each adding $C_{ij'}U_{j'j}$, and the downward pushes (Line~\ref{line:pushdownY}) of $(i',j)$ with $i'<i$, each adding $L_{ii'}E_{i'j}$, so
\begin{equation}
Y_{ij}=X_{ij}+\sum_{j'<j}C_{ij'}U_{j'j}+\sum_{i'<i}L_{ii'}E_{i'j}.
\label{eq:parallel-pushes}
\end{equation}
Each $C_{ij'}$ denotes its value at push time (stage $i+j'$), and that value is already final: since $L$ is lower triangular, $(LE)_{ij'}$ involves only the errors $E_{i'j'}$ with $i'\le i$, all produced by stage $i+j'$, and later folds into column $j'$ modify only rows below $i$.
The downward sum is exactly the missing $j'=j$ term: since $E_{ij}=0$ before $(i,j)$ is rounded and $L_{ii}=U_{jj}=1$, the invariant $C=LE$ gives
\begin{equation}
\sum_{i'<i}L_{ii'}E_{i'j}=\sum_{i'\le i}L_{ii'}E_{i'j}=(LE)_{ij}=C_{ij}=C_{ij}U_{jj}.
\end{equation}
Substituting into \cref{eq:parallel-pushes} and using $C=LE$,
\begin{equation}
Y_{ij}=X_{ij}+\sum_{j'\le j}C_{ij'}U_{j'j}=X_{ij}+\sum_{j'\le j}(LE)_{ij'}\,U_{j'j}=X_{ij}+(LEU)_{ij} ,
\end{equation}
as required.
\end{proof}

\subsection{Blocked GPTQ-2D}
\label{sec:blocked}

The feedback in \cref{alg:parallel} consists of many short row and column updates.
Such memory-bound updates are far slower than dense matrix products on parallel hardware.
GPTQ owes its practical speed to \emph{lazy block updates}~\citep{frantar2023optq}: cheap updates within a block, then one dense matrix product that flushes the block's accumulated error to all remaining coordinates.
We port this idea to the anti-diagonal sweep.

We process the sweep in blocks of $w$ consecutive anti-diagonals, keeping a single global buffer $C=LE$ across blocks (\cref{alg:blocked}).
Within a block, we run \cref{alg:parallel} with the fold and both pushes clipped at the block boundary, so every in-block update touches only the block's $w$ anti-diagonals.
Index ranges outside the matrix bounds are skipped.
After the block, two band-like matrix products deliver the remaining feedback to all later entries: a down flush of the band of $E$ through $L$ along the matrix columns, and a right flush of the block's anti-diagonal band of $C$ through $U$ along the matrix rows.
The down flush does double duty: the product it adds to a later entry $Y_{ij}$ is exactly the part of the fold deferred past the block boundary, so the same product also completes the buffer $C=LE$ below the block, at no extra matrix product.
This is GPTQ's lazy block update on anti-diagonals: many small updates collapse into two band-like matrix products per block.
Since the $w$-wide windows of $U$ and $L$ shift with the receiving row or column, each flush runs efficiently as a batched matrix-vector multiplication with overlapping windows.

\begin{algorithm}[!htb]
\caption{Blocked GPTQ-2D}
\label{alg:blocked}
\begin{algorithmic}[1]

\STATEx {\bfseries Input:}

\STATEx input matrix $X \in \mathbb{R}^{m \times n}$, nonsingular basis matrices $A \in \mathbb{R}^{m \times m}$ and $B \in \mathbb{R}^{n \times n}$, block width $w$

\STATEx {\bfseries Output:}

\STATEx rounded matrix $Z \in \mathbb{Z}^{m \times n}$

\STATE $L \gets \operatorname{LDL}\big((A^{\top}A)^{-1}\big)$
\STATE $U \gets \operatorname{LDL}\big((BB^{\top})^{-1}\big)^{\top}$
\STATE Initialize $Z \in \mathbb{Z}^{m \times n}$
\STATE Initialize $E \in \mathbb{R}^{m \times n}$
\STATE $Y \gets X$
\STATE $C \gets 0 \in \mathbb{R}^{m \times n}$ \algcomment{buffer $C=LE$}
\FOR{$s_0 = 2, 2+w, 2+2w, \ldots, 2 + \lfloor (m+n-2)/w \rfloor w$}
    \STATE $s_1 \gets \min(s_0 + w - 1,\ m+n)$
    \FOR{$s = s_0, \ldots, s_1$}
        \FORALL{$(i,j) \in \mathcal{I}_s$ \textbf{in parallel}}
            \STATE $Z_{ij} \gets \lfloor Y_{ij}\rceil$
            \STATE $E_{ij} \gets Z_{ij} - Y_{ij}$
            \STATE $M_{i:s_1-j,\, j} \gets L_{i:s_1-j,\, i}\, E_{ij}$ \algcomment{the error scaled down column $i$ of $L$}
            \STATE $C_{i:s_1-j,\, j} \gets C_{i:s_1-j,\, j} + M_{i:s_1-j,\, j}$ \algcomment{fold within block into $C=LE$}
            \STATE $Y_{i+1:s_1-j,\, j} \gets Y_{i+1:s_1-j,\, j} + M_{i+1:s_1-j,\, j}$ \algcomment{push down, within block}
        \ENDFOR
        \FORALL{$(i,j) \in \mathcal{I}_s$ \textbf{in parallel}}
            \STATE $Y_{i,\, j+1:s_1-i} \gets Y_{i,\, j+1:s_1-i} + C_{ij}\, U_{j,\, j+1:s_1-i}$ \algcomment{push right, within block}
        \ENDFOR
    \ENDFOR
    \STATE $\mathcal{I}_{>s_1} \gets \bigcup_{s=s_1+1}^{m+n} \mathcal{I}_s$ \algcomment{all entries after the block}
    \FORALL{$(i,j) \in \mathcal{I}_{>s_1}$ \textbf{in parallel}}
        \STATE $M_{ij} \gets L_{i,\, s_0-j:s_1-j}\; E_{s_0-j:s_1-j,\, j}$ \algcomment{band-like matrix product}
        \STATE $C_{ij} \gets C_{ij} + M_{ij}$ \algcomment{outer fold: complete $C=LE$ below the block}
        \STATE $Y_{ij} \gets Y_{ij} + M_{ij}$ \algcomment{flush down}
    \ENDFOR
    \FORALL{$(i,j) \in \mathcal{I}_{>s_1}$ \textbf{in parallel}}
        \STATE $Y_{ij} \gets Y_{ij} + C_{i,\, s_0-i:s_1-i}\; U_{s_0-i:s_1-i,\, j}$ \algcomment{flush right: band-like matrix product}
    \ENDFOR
\ENDFOR
\STATE \textbf{return} $Z$
\end{algorithmic}
\end{algorithm}

Every feedback contribution of \cref{alg:parallel}, from one rounded entry to one later entry, is applied exactly once, either by a clipped push inside the block or by a flush; the buffer $C=LE$ is maintained by the clipped fold within the block and completed for the later entries by the down flush, and all feedback from a block lands before the next block starts.
\cref{alg:blocked} therefore produces the same $Z$ as \cref{alg:parallel}.
The total feedback work is $O(mn\max(m,n))$ for any block width $w$, as in \cref{alg:parallel}.

\subsection{Other Variants}
\label{sec:variants}

Beyond \cref{alg:parallel,alg:blocked}, two further equivalent variants are deferred to the appendices.
\cref{app:nested} gives a sequential nested form that sweeps column by column rather than by anti-diagonal, and \cref{app:implicit} gives implicit-$Y$ realizations that keep a single feedback buffer and reconstruct the corrected entries on demand.
Both produce the same rounded matrix $Z$ (\cref{thm:order}) and match the $O(mn\max(m,n))$ sweep bound of GPTQ-2D.
\cref{tab:complexity} compares the cost of all realizations.

\begin{table}[ht]
\centering
\caption{Cost of computing the rounding trajectory, with $\mu=\max(m,n)$. Setup forms the LDL factors of Gram inverses, sweep is the rounding pass, and depth counts sequential stages. GPTQ is listed for reference: it rounds under the one-sided objective, the special case $B=I$. For $m\ge n$ every entry of its row is matched by GPTQ-2D, at $O(m^3)$ setup, $O(m^2n)$ sweep, and $O(m)$ depth, so the two-sided extension is asymptotically free in all three costs.}
\label{tab:complexity}
\begin{tabular}{lllll}
\toprule
Method & Setup & Sweep & Depth & Algorithm \\
\midrule
\multicolumn{5}{l}{\emph{One-sided objective} $\|A(Z-X)\|_{\mathrm{F}}^2$} \\
GPTQ & $O(m^3)$ & $O(m^2n)$ & $O(m)$ & \ref{alg:1d} \\ 
\midrule
\multicolumn{5}{l}{\emph{Two-sided objective} $\|A(Z-X)B\|_{\mathrm{F}}^2$} \\
Vectorized adaptive rounding & $O(\mu^3)$ & $O(m^2n^2)$ & $O(mn)$ & \ref{alg:1d} \\ 
Direct anti-diagonal dense updates & $O(\mu^3)$ & $O(m^2n^2)$ & $O(\mu)$ & \ref{alg:dense} \\ 
GPTQ-2D, buffered anti-diagonal & $O(\mu^3)$ & $O(mn\mu)$ & $O(\mu)$ & \ref{alg:parallel}, \ref{alg:blocked}, \ref{alg:implicit} \\ 
Nested sequential sweep & $O(\mu^3)$ & $O(mn\mu)$ & $O(mn)$ & \ref{alg:nested} \\ 
\bottomrule
\end{tabular}
\end{table}

\section{Padded Skew Layout}
\label{sec:skew}

The blocked sweep of \cref{sec:blocked} touches three different access patterns: an anti-diagonal $\mathcal{I}_s$, a matrix column $Y_{:,j}$, and a matrix row $Y_{i,:}$.
In an array-level framework (e.g., PyTorch) an update is efficient only when the entries it touches form a slice with a fixed stride and a fixed length; otherwise it degenerates into a gather.
Anti-diagonals are not such slices in either the row-major or the column-major layout, so we store the working matrices in a layout in which all three patterns are.
This layout is a property of the framework rather than of the algorithm: a kernel-level implementation can keep the natural layout and mask the out-of-range lanes of each anti-diagonal instead, in which case none of the padding below is needed.

\begin{figure}[ht]
\centering
\begin{tikzpicture}[x=0.6cm,y=0.6cm,font=\scriptsize]
  \foreach \r/\c in {1/0,2/0,2/1,3/0,3/1,4/0,4/1,5/0,5/1, 0/6,0/7,1/6,1/7,2/6,2/7,3/6,3/7,4/7}{
    \fill[black!4] (\c,-\r-1) rectangle (\c+1,-\r);
    \draw[black] (\c,-\r-1) rectangle (\c+1,-\r);
  }
  \foreach \r/\c in {0/0,0/1,1/1, 4/6,5/6,5/7}{
    \fill[black!12] (\c,-\r-1) rectangle (\c+1,-\r);
    \draw[black] (\c,-\r-1) rectangle (\c+1,-\r);
    \node[gray!75] at (\c+0.5,-\r-0.5) {$\circ$};
  }
  \foreach \r/\c in {0/3,0/4,0/5,1/4,1/5,2/5,3/2,4/2,4/3,5/2,5/3,5/4}{
    \fill[black!26] (\c,-\r-1) rectangle (\c+1,-\r);
    \draw[black] (\c,-\r-1) rectangle (\c+1,-\r);
    \node[gray] at (\c+0.5,-\r-0.5) {$\varnothing$};
  }
  \foreach \r/\c/\lab in {0/2/11, 1/2/21, 1/3/12, 2/2/31, 2/3/22, 2/4/13, 3/3/32, 3/4/23, 3/5/14, 4/4/33, 4/5/24, 5/5/34}{
    \fill[yellow!35] (\c,-\r-1) rectangle (\c+1,-\r);
    \draw[black] (\c,-\r-1) rectangle (\c+1,-\r);
    \node[font=\tiny] at (\c+0.5,-\r-0.5) {$X_{\lab}$};
  }
  \foreach \r/\ss in {0/2,1/3,2/4,3/5,4/6,5/7}{ \node[overlay,font=\tiny,anchor=east] at (-0.15,-\r-0.5) {$s{=}\ss$}; }
  \draw[red,line width=1.7pt,opacity=0.55,line cap=round] (2.5,-2.5)--(4.5,-2.5);
  \draw[blue,line width=1.7pt,opacity=0.55,line cap=round] (3.5,-0.5)--(3.5,-5.5);
  \draw[ForestGreen,line width=1.7pt,opacity=0.6,line cap=round] (1.5,-0.5)--(6.5,-5.5);
  \fill[orange,opacity=0.10] (0,-4) rectangle (8,-2);
  \draw[orange!85!black,line width=1.2pt,dashed,rounded corners=2pt] (0,-4) rectangle (8,-2);
  \node[overlay,font=\scriptsize,orange!65!black,anchor=west] at (8.15,-3) {block $w{=}2$};
\end{tikzpicture}

\vspace{0.4em}

\begin{tikzpicture}[x=0.6cm,y=0.6cm,font=\scriptsize]
  \draw[red,line width=1.7pt,opacity=0.75] (0,0)--(0.7,0);
  \node[anchor=west,font=\scriptsize] at (0.85,0) {anti-diagonal $\mathcal{I}_s$: contiguous row (round)};
  \draw[blue,line width=1.7pt,opacity=0.75] (0,-0.85)--(0.7,-0.85);
  \node[anchor=west,font=\scriptsize] at (0.85,-0.85) {matrix column $Y_{:,j}$: stride $r$ (push down, flush)};
  \draw[ForestGreen,line width=1.7pt,opacity=0.8] (0,-1.7)--(0.7,-1.7);
  \node[anchor=west,font=\scriptsize] at (0.85,-1.7) {matrix row $Y_{i,:}$: stride $r{+}1$ (push right, flush)};
  \draw[orange!85!black,line width=1.2pt,dashed] (0,-2.55)--(0.7,-2.55);
  \node[anchor=west,font=\scriptsize] at (0.85,-2.55) {a block of $w$ anti-diagonals (flushed together)};
\end{tikzpicture}
\caption{The padded skew array, for a $3\times4$ example (rows $=$ anti-diagonals; row stride $r=2m+n-2=8$). Entry $(i,j)$ sits in row $s=i+j$ at intra-row position $m-1+j$. Read with three strides, the same cells serve every access: an anti-diagonal is a contiguous row (\textcolor{red}{red}, stride $1$); a matrix column $Y_{:,j}$ is a vertical stride-$r$ line (\textcolor{blue}{blue}); a matrix row $Y_{i,:}$ is a diagonal stride-$(r{+}1)$ line (\textcolor{ForestGreen}{green}). The \textcolor{orange!75!black}{dashed} box marks a block of $w$ consecutive anti-diagonals updated together, then closed by two flushes. The $\varnothing$ and $\circ$ cells hold no data.}
\label{fig:skew}
\end{figure}

\textbf{The construction.}
We store $Y$, $Z$, $C$, and $E$ in a padded array of $m+n-1$ rows with a common row stride $r=2m+n-2$ (\cref{fig:skew}).
Entry $(i,j)$ is placed in row $s=i+j$ at intra-row position $m-1+j$.
Since $s=i+j$ is constant along an anti-diagonal while the intra-row position increases with $j$, each anti-diagonal $\mathcal{I}_s$ is a contiguous segment of one row, with its columns in the same left-to-right order as in the matrix.

\textbf{The three views.}
The same cells then serve the other two access patterns at fixed strides.
Stepping down one matrix row within a fixed matrix column advances $s$ by one and leaves the intra-row position unchanged, so a matrix column $Y_{:,j}$ is a vertical stride-$r$ line; this is the view used by the downward push and by the down flush.
Stepping right one matrix column within a fixed matrix row advances both $s$ and the intra-row position by one, so a matrix row $Y_{i,:}$ is a diagonal stride-$(r{+}1)$ line; this is the view used by the rightward push and by the right flush.
Every access of the sweep is therefore a slice with a fixed stride and a fixed length.

\textbf{The padding.}
Two kinds of cell hold no data.
The $\varnothing$ cells are positions that an anti-diagonal leaves empty because it has fewer than $n$ entries.
The $\circ$ cells at the upper left and lower right are where the two ends of a stride-$(r{+}1)$ row line overhang.
A row line can overhang by up to $m-1$ positions at either end, so each row reserves $m-1$ positions on each side of the band, giving the row stride $r=2(m-1)+n=2m+n-2$.
Neither kind is ever read as data, since \cref{alg:blocked} skips index ranges outside the matrix bounds, but both are allocated: a buffer occupies $(m+n-1)(2m+n-2)$ cells to hold $mn$ entries of data.
That is $\Theta((\max(m,n))^2)$, so the overhead is mild for square matrices and grows with the aspect ratio.
Since rounding $X^{\top}$ under the swapped bases $B^{\top}$ and $A^{\top}$ is equivalent, one may transpose the problem to the orientation with smaller $m$ to reduce it.

\section{Applications and Discussion}
\label{sec:discussion}

The formulation separates two questions that applications often conflate.
The first is algorithmic.
Given a two-sided quadratic objective $\|A(Z-X)B\|_{\mathrm{F}}^2$, how does one efficiently compute the fixed-order rounding trajectory induced by its Kronecker metric?
This paper answers that with an exact algorithm whose rounding sweep takes $O(mn\max(m,n))$ work, a factor $\Theta(\min(m,n))$ below the vectorized sweep and, for $m\ge n$, the same total order as one-sided GPTQ on the same matrix (\cref{tab:complexity}).
The second is about modeling.
How should $A$ and $B$ be chosen in a given application?
That question is separate.
In neural network quantization, for instance, $A$ and $B$ may come from factors of Kronecker Hessian approximations.
This paper does not address that estimation problem.
The method should be read as a work-efficient way to carry out a given Kronecker-factored rounding, not as a new rounding rule.

There are two caveats.
First, the equivalence guarantee assumes the basis matrices $A$ and $B$, hence the feedback factors $L$ and $U$, are fixed throughout the sweep; if they adapt to intermediate rounding decisions, for instance through data-dependent scaling, exact equivalence to the fixed vectorized procedure need not hold.
Second, GPTQ-2D computes a fixed-order trajectory and makes no global-optimality claim; this is inherited from one-dimensional adaptive rounding, and in particular from GPTQ, rather than introduced by the two-sided setting.
Computing an exact closest rounding is NP-hard in general.

\bibliography{bibliography}
\bibliographystyle{arxiv}
\addcontentsline{toc}{section}{References}

\clearpage
\appendix
\crefalias{section}{appendix}

\section{Nested Algorithm}
\label{app:nested}

A sequential, column-major variant maintains the corrected matrix $Y$ (initialized to $X$) and rounds it column by column.
Column $j$ is rounded top to bottom by the one-dimensional primitive of \cref{sec:1d} under the row factor $L$, yielding the rounded-and-corrected column $y$; its accumulated error is then propagated to the not-yet-processed columns through the $j$-th row of $U$,
\begin{equation}
Y \leftarrow Y + (y-Y_{:,j})\,U_{j,:}.
\label{eq:nested-update}
\end{equation}
Because $U$ is unit upper triangular, this leaves columns $1,\ldots,j-1$ untouched and sets column $j$ to $y$.
\cref{alg:nested} is the column-major form; the row-major mirror (rounding along rows under $U$, propagating through $L$) is symmetric.
Visiting the entries in column-major order realizes the forward sweep of $\operatorname{vec}(X)$ itself, applying the identical rank-one feedback of \cref{eq:rank-one-update}, so the same $Z$ is produced (\cref{thm:order}).
Each column costs an $O(m^2)$ inner sweep and an $O(m(n-j))$ propagation, so the total is $O(mn\max(m,n))$ work in $mn$ sequential steps.

\begin{algorithm}[!htb]
\caption{Nested form (column-major)}
\label{alg:nested}
\begin{algorithmic}[1]

\STATEx {\bfseries Input:}

\STATEx input matrix $X \in \mathbb{R}^{m \times n}$, nonsingular basis matrices $A \in \mathbb{R}^{m \times m}$ and $B \in \mathbb{R}^{n \times n}$

\STATEx {\bfseries Output:}

\STATEx rounded matrix $Z \in \mathbb{Z}^{m \times n}$

\STATE $L \gets \operatorname{LDL}\big((A^{\top}A)^{-1}\big)$
\STATE $U \gets \operatorname{LDL}\big((BB^{\top})^{-1}\big)^{\top}$
\STATE Initialize $Z \in \mathbb{Z}^{m \times n}$
\STATE $Y \gets X$
\FOR{$j = 1, \ldots, n$}
    \STATE $y \gets Y_{:,j}$
    \FOR{$i = 1, \ldots, m$}
        \STATE $Z_{ij} \gets \lfloor y_i \rceil$
        \STATE $E_{ij} \gets Z_{ij} - y_i$
        \STATE $y_{i:m} \gets y_{i:m} + L_{i:m, i} E_{ij}$
    \ENDFOR
    \STATE $Y \gets Y + (y - Y_{:,j}) U_{j, :}$ \algcomment{propagate column error via $U$}
\ENDFOR
\STATE \textbf{return} $Z$
\end{algorithmic}
\end{algorithm}

\section{\texorpdfstring{Implicit-$Y$}{Implicit-Y} Realizations}
\label{app:implicit}

The algorithms of \cref{sec:parallel,app:nested} maintain the corrected matrix $Y=X+LEU$ directly.
An equivalent family instead keeps a single triangular-feedback buffer and reconstructs the corrected entries on demand.
Grouping the rank-one feedbacks of \cref{eq:rank-one-update} along columns accumulates the errors through $L$ into the left buffer $C=LE$, so that $Y=X+CU$ and a current entry reads
\begin{equation}
Y_{ij}=X_{ij}+(CU)_{ij}=X_{ij}+C_{i,1:j}U_{1:j,j};
\label{eq:Yij-C}
\end{equation}
after rounding, setting $E_{ij}$ changes only column $j$ of $C$, giving the suffix update
\begin{equation}
C_{i:m,j}\leftarrow C_{i:m,j}+L_{i:m,i}E_{ij}.
\label{eq:C-update}
\end{equation}
Sweeping anti-diagonal by anti-diagonal gives \cref{alg:implicit}; grouping along rows instead yields the symmetric right buffer $D=EU$ with $Y=X+LD$.
Both apply the identical feedback of \cref{eq:rank-one-update}, so they produce the same rounded matrix $Z$.
The cost matches that of \cref{alg:parallel}: reading $C_{i,1:j}U_{1:j,j}$ costs $O(j)$ for entry $(i,j)$ and updating $C_{i:m,j}$ costs $O(m-i+1)$, for a total of $O(mn\max(m,n))$.

\begin{algorithm}[!htb]
\caption{Implicit-\texorpdfstring{$Y$}{Y} form (anti-diagonal)}
\label{alg:implicit}
\begin{algorithmic}[1]

\STATEx {\bfseries Input:}

\STATEx input matrix $X \in \mathbb{R}^{m \times n}$, nonsingular basis matrices $A \in \mathbb{R}^{m \times m}$ and $B \in \mathbb{R}^{n \times n}$

\STATEx {\bfseries Output:}

\STATEx rounded matrix $Z \in \mathbb{Z}^{m \times n}$

\STATE $L \gets \operatorname{LDL}\big((A^{\top}A)^{-1}\big)$
\STATE $U \gets \operatorname{LDL}\big((BB^{\top})^{-1}\big)^{\top}$
\STATE Initialize $Z \in \mathbb{Z}^{m \times n}$
\STATE $C \gets 0 \in \mathbb{R}^{m \times n}$ \algcomment{buffer $C=LE$}
\FOR{$s = 2, \ldots, m + n$}
    \STATE $\mathcal{I}_s \gets \{ (i, j) : 1 \le i \le m,\ 1 \le j \le n,\ i + j = s \}$
    \FORALL{$(i, j) \in \mathcal{I}_s$ \textbf{in parallel}}
        \STATE $Y_{ij} \gets X_{ij} + C_{i, 1:j} U_{1:j, j}$
        \STATE $Z_{ij} \gets \lfloor Y_{ij}\rceil$
        \STATE $E_{ij} \gets Z_{ij} - Y_{ij}$
    \ENDFOR
    \FORALL{$(i, j) \in \mathcal{I}_s$ \textbf{in parallel}}
        \STATE $C_{i:m, j} \gets C_{i:m, j} + L_{i:m, i} E_{ij}$
    \ENDFOR
\ENDFOR
\STATE \textbf{return} $Z$
\end{algorithmic}
\end{algorithm}

\end{document}